\documentclass[twoside,onecolumn,a4paper,final]{article}%
\usepackage{amssymb}
\usepackage{amsfonts}
\usepackage{amsmath}
\usepackage{graphicx}%
\setcounter{MaxMatrixCols}{30}
\providecommand{\U}[1]{\protect\rule{.1in}{.1in}}
\topmargin -0.40in \oddsidemargin 0.08in \evensidemargin 0.08in
\marginparwidth 0.00in \marginparsep 0.00in \textwidth 15cm
\textheight 23cm \pagestyle{myheadings} \markboth{\rm\small M.W.Yuen }
{\rm\small Vortical Flows of 2D Euler Equations}
\arraycolsep=1.5pt
\linespread{1.5}

\newtheorem{theorem}{Theorem}

\newtheorem{corollary}[theorem]{Corollary}

\newtheorem{lemma}[theorem]{Lemma}

\newtheorem{remark}[theorem]{Remark}

\newenvironment{proof}[1][Proof]{\noindent\textbf{#1.} }{\ \rule{0.5em}{0.5em}}
\begin{document}

\title{\textbf{Vortical and Self-similar Flows of 2D Compressible Euler Equations}}
\author{M\textsc{anwai Yuen\thanks{E-mail address: nevetsyuen@hotmail.com }}\\\textit{Department of Mathematics and Information Technology,}\\\textit{The Hong Kong Institute of Education,}\\\textit{10 Po Ling Road, Tai Po, New Territories, Hong Kong}}
\date{Revised 19-Jan-2013}
\maketitle

\begin{abstract}
This paper presents the vortical and self-similar solutions for 2D
compressible Euler equations using the separation method. These solutions
complement Makino's solutions in radial symmetry without rotation. The
rotational solutions provide new information that furthers our understanding
of ocean vortices and reference examples for numerical methods. In addition,
the corresponding blowup, time-periodic or global existence conditions are
classified through an analysis of the new Emden equation. A conjecture
regarding rotational solutions in 3D is also made.

\ 

MSC: 76U05, 35C05, 35C06, 35B10, 35R35

\ 

Key Words: Vortex, Compressible Euler Equations, Vortical Solution,
Self-similar Solution, Time-periodic Solution

\end{abstract}

\section{Introduction}

In fluid mechanics, the N-dimensional isentropic compressible Euler and
Navier-Stokes equations are expressed as follows:%
\begin{equation}
\left\{
\begin{array}
[c]{rl}%
{\normalsize \rho}_{t}{\normalsize +\nabla\cdot(\rho\vec{u})} &
{\normalsize =}{\normalsize 0}\\[0.08in]%
\rho\lbrack\vec{u}_{t}+(\vec{u}\cdot\nabla)\vec{u}]+K\nabla\rho^{\gamma} & =0,
\end{array}
\right.  \label{e1}%
\end{equation}
where $\rho=\rho(t,\vec{x})$ denotes the density of the fluid, $\vec{u}%
=\vec{u}(t,\vec{x})=(u_{1},u_{2},\cdots,u_{N})\in R^{N}$ is the velocity,
$\vec{x}=(x_{1},x_{2},\cdots,x_{N})\in R^{N}$ is the space variable, and
$K>0,\;\gamma>1$ are constants.

The Euler equations have been studied in great detail by numerous scholars
because of their significance in a variety of physical fields, such as fluids,
plasmas, condensed matter, astrophysics, oceanography, and atmospheric
dynamics. These equations are also important in physics and are widely used in
different areas of study. For instance, the Euler system is the basic model
for shallow water flows \cite{ConstantinA}. It also provides a good model of
the superfluids produced by the Bose-Einstein condensates in the dilute gases
of alkali metal atoms, in which identical gases do not interact at very low
temperatures \cite{Einzel}. At the microscopic level, fluids or gases are
formed by many tiny discrete molecules or particles that collide with one
another. As the cost of directly calculating the particle-to-particle or
molecule-to-molecule evolution of fluids on a large scale is expensive,
approximation methods are needed to considerably simplify the process.
Therefore, at the macroscopic scale, the continuum assumption, which considers
fluids as continuous, is applied to the modeling. Here, the Euler equations
provide a good description of the fluids at the statistical limit of a large
number of small ideal molecules or particles by ignoring the less influential
effects, such as the self-gravitational forces and relativistic effect
\cite{CIP}. For a mathematical introduction of the Euler equations, readers
are referred to \cite{Lion} and \cite{CW}.

The construction of analytical or exact solutions is an important area in
mathematical physics and applied mathematics, as it can further classify their
nonlinear phenomena. For non-rotational flows, Makino first obtained the
radial symmetry solutions for the Euler equations (\ref{e1}) in $R^{N}$ in
1993 \cite{Makino93exactsolutions}. A number of special solutions for these
equations \cite{Li}, \cite{LW}, \cite{Yuen3DexactEuler}, \cite{YuenPLA}, and
\cite{YuenCNSNS2012} were subsequently obtained. For rotational flows, Zhang
and Zheng \cite{ZZ} constructed explicitly rotational solutions for the Euler
equations with $\gamma=2$ in 1997:%
\begin{equation}
\rho=\frac{r^{2}}{8Kt^{2}},\text{ }u_{1}=\frac{1}{2t}(x+y),\text{ }u_{2}%
=\frac{1}{2t}(x-y), \label{ZZZ}%
\end{equation}
where $x=r\cos\theta$ and $y=r\sin\theta.$

Very recently, Kwong and Yuen \cite{KYIso} constructed a family of rotational
solutions for the Euler-Poisson equations
\begin{equation}
\left\{
\begin{array}
[c]{rl}%
{\normalsize \rho}_{t}{\normalsize +\nabla\cdot(\rho\vec{u})} &
{\normalsize =}{\normalsize 0}\\
\rho(\vec{u}_{t}+(\vec{u}\cdot\nabla)\vec{u}){\normalsize +K\nabla\rho} &
{\normalsize =}{\normalsize -\rho\nabla\Phi}\\
{\normalsize \Delta\Phi(t,\vec{x})} & {\normalsize =2\pi}{\normalsize \rho,}%
\end{array}
\right.  \label{Euler-Poisson2DRotation}%
\end{equation}
with $N=2$ and $\gamma=1$:\newline%
\begin{equation}
\left\{
\begin{array}
[c]{c}%
\rho(t,\vec{x})=\frac{1}{a(t)^{2}}e^{f(r/a(t))}\text{, }{\normalsize \vec
{u}(t,\vec{x})=}\frac{\overset{\cdot}{a}(t)}{a(t)}(x,y){+}\frac{\xi}{a(t)^{2}%
}(-y,x)\\
\ddot{a}(t)=\frac{-\lambda}{a(t)}+\frac{\xi^{2}}{a(t)^{3}}\text{, }%
a(0)=a_{0}>0\text{, }\dot{a}(0)=a_{1}\\
\overset{\cdot\cdot}{f}(s){\normalsize +}\frac{1}{s}\overset{\cdot}%
{f}(s){\normalsize +\frac{2\pi}{K}e}^{f(s)}{\normalsize =}\frac{2\lambda}%
{K}{\normalsize ,}\text{ }f(0)=\alpha,\text{ }\overset{\cdot}{f}(0)=0,
\end{array}
\right.  \label{2-DIsothermalRotation}%
\end{equation}
with arbitrary constants $\xi\neq0,$ $a_{0}$, $a_{1}$ and $\alpha$.\newline
The rotational case complements Yuen's solutions without rotation ($\xi=0$)
\cite{YuenJMAA2008a}. In this paper, based on the foregoing development, we
provide the corresponding vortical flows for 2D compressible Euler equations
(\ref{e1}) with $\gamma>1$ in the following result.

\begin{theorem}
\label{thm:1}For $\gamma>1$, there exists a family of vortical flows in radial
symmetry for the compressible Euler equations (\ref{e1}) in 2D,%
\begin{equation}
\left\{
\begin{array}
[c]{c}%
\rho(t,\vec{x})=\frac{\max\left(  \left(  -\frac{\lambda(\gamma-1)}{2K\gamma
}s+\alpha\right)  ^{\frac{1}{\gamma-1}},\text{ }0\right)  }{a(t)^{2}}\\
{\normalsize \vec{u}(t,\vec{x})=}\frac{\overset{\cdot}{a}(t)}{a(t)}%
(x,y){+}\frac{\xi}{a(t)^{2}}(-y,x)\\
\ddot{a}(t)-\frac{\xi^{2}}{a(t)^{3}}-\frac{\lambda}{a(t)^{2\gamma-1}}=0\text{,
}a(0)=a_{0}>0\text{, }\dot{a}(0)=a_{1},
\end{array}
\right.  \label{2-Dg>1Rotation}%
\end{equation}
with the self-similar variable $s=\frac{x^{2}+y^{2}}{a(t)^{2}}$ and arbitrary
constants $\xi\neq0,$ $a_{0}$, $a_{1},$ and $\alpha$.
\end{theorem}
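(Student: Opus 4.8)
The statement is of verification type, so the plan is to substitute the ansatz (\ref{2-Dg>1Rotation}) directly into the system (\ref{e1}) with $N=2$ and to show that both equations hold once $a(t)$ solves the stated Emden-type ODE and the density profile has the prescribed form. It will be convenient to write $\rho = a(t)^{-2}\,y(s)$ with self-similar variable $s=(x^{2}+y^{2})/a(t)^{2}$ and profile $y(s)=\max\!\left(\left(-\tfrac{\lambda(\gamma-1)}{2K\gamma}s+\alpha\right)^{1/(\gamma-1)},\,0\right)$, and to record that $\vec u=\tfrac{\dot a}{a}\vec x+\tfrac{\xi}{a^{2}}\vec x^{\perp}$ with $\vec x=(x,y)$ and $\vec x^{\perp}=(-y,x)$. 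The verification splits into the two equations of (\ref{e1}), followed by a discussion of the vacuum interface.

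First I would handle the continuity equation. Since $\nabla\cdot\vec x^{\perp}=0$ and $\nabla\cdot\vec x=2$, one has $\nabla\cdot\vec u=2\dot a/a$; and since $\vec x^{\perp}$ is orthogonal to the gradient of any radially symmetric function, the rotational part of $\vec u$ drops out of $\vec u\cdot\nabla\rho$. A short computation using $s_{t}=-2s\,\dot a/a$ and $\vec x\cdot\nabla\rho=r\,\partial_{r}\rho=2s\,a^{-2}y'(s)$ then shows that $\rho_{t}+\vec u\cdot\nabla\rho+\rho\,\nabla\cdot\vec u\equiv 0$, for \emph{any} profile $y$ and \emph{any} positive $a(t)$; thus the mass equation imposes no constraint, which is the mechanism behind the separation method here.

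Next I would handle the momentum equation. Because $\vec u$ is linear in $\vec x$, writing $\vec u=M(t)\vec x$ with $M=\tfrac{\dot a}{a}I+\tfrac{\xi}{a^{2}}J$, $J=\left(\begin{smallmatrix}0&-1\\1&0\end{smallmatrix}\right)$, we obtain $\vec u_{t}+(\vec u\cdot\nabla)\vec u=(\dot M+M^{2})\vec x$; using $J^{2}=-I$ the cross terms cancel and $\dot M+M^{2}=\bigl(\tfrac{\ddot a}{a}-\tfrac{\xi^{2}}{a^{4}}\bigr)I$, which by the ODE $\ddot a=\xi^{2}/a^{3}+\lambda/a^{2\gamma-1}$ simplifies to $\tfrac{\lambda}{a^{2\gamma}}I$. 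Hence the momentum equation reduces to $K\nabla\rho^{\gamma}=-\tfrac{\lambda}{a^{2\gamma}}\rho\,\vec x$. On $\{\rho>0\}$, with $\rho^{\gamma}=a^{-2\gamma}y^{\gamma}$ and $\nabla s=2\vec x/a^{2}$, this is exactly the profile equation $2K\,(y^{\gamma})'=-\lambda\,y$, i.e. $(y^{\gamma-1})'=-\tfrac{\lambda(\gamma-1)}{2K\gamma}$, whose solution with $y(0)^{\gamma-1}=\alpha$ is precisely $y^{\gamma-1}=-\tfrac{\lambda(\gamma-1)}{2K\gamma}s+\alpha$; the truncation at $0$ accounts for the vacuum region.

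Finally I would address the sense in which (\ref{2-Dg>1Rotation}) solves (\ref{e1}). Wherever $\rho>0$ the above yields a classical solution, and in the interior of $\{\rho=0\}$ the velocity field is still smooth and both equations hold trivially. The delicate point — the step I expect to require the most care — is the vacuum interface $s=s_{\ast}$: there $\rho$ is merely continuous, but $\rho^{\gamma}$ behaves like $(s_{\ast}-s)_{+}^{\gamma/(\gamma-1)}$ with exponent $\gamma/(\gamma-1)>1$, so $\rho^{\gamma}\in C^{1}$, the pressure gradient is continuous across the interface, and since $\rho\vec u$ is continuous as well, no singular measure appears; hence (\ref{2-Dg>1Rotation}) is a global weak solution of (\ref{e1}) and a classical one on $\{\rho>0\}$. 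I would also remark that the initial value problem for the Emden-type ODE with $a(0)=a_{0}>0$ is locally solvable by standard ODE theory, so the flow exists at least for a short time, and comment separately on the degenerate sign choices (e.g. $\lambda\le0$ or $\alpha\le0$) for which $\rho$ is either everywhere positive or identically zero.
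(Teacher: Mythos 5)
Your proposal is correct and follows essentially the same route as the paper: direct substitution of the self-similar/rotational ansatz, with the mass equation holding identically and the momentum equation separating into the Emden ODE for $a(t)$ and the profile equation $2K\gamma f^{\gamma-2}\dot f+\lambda=0$, whose solution is exactly the stated $f$. Your matrix form $\vec u=M(t)\vec x$ with $\dot M+M^2=(\ddot a/a-\xi^2/a^4)I$ is just a tidier packaging of the paper's componentwise cancellation of the $\xi\dot a/a^3$ cross terms, and your discussion of the vacuum interface (where $\rho^\gamma\in C^1$ so the truncated solution is weak globally and classical on $\{\rho>0\}$) is a welcome addition that the paper omits.
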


\begin{remark}
This result complements Makino's solutions in radial symmetry without rotation
($\xi=0$). The vortical solutions (\ref{2-Dg>1Rotation}) provide new
information that may further our understanding of oceans vortices and
reference examples for numerical methods in computational physics.
\end{remark}

\begin{remark}
Zhang and Zheng's solution (\ref{e1}) for $\gamma=2$ is a special case in our
solutions (\ref{2-Dg>1Rotation}).
\end{remark}

\section{\textbf{2D Vortical and Self-similar Flows}}

Here, we provide a lemma for the vortical flows in 2D for the mass equation
(\ref{e1})$_{1}$. This lemma originates in Kwong and Yuen's paper
\cite{KYIso}, which constructs periodic and spiral solutions for 2D
Euler-Poisson equations (\ref{Euler-Poisson2DRotation}) with $\gamma=1$.

\begin{lemma}
[\cite{KYIso}]\label{lem:generalsolutionformasseqrotation2d}For the equation
of the conservation of mass,
\begin{equation}
\rho_{t}+\nabla\cdot\left(  \rho\vec{u}\right)  =0,
\label{massequationspherical2Drotation}%
\end{equation}
there exist the following solutions:%
\begin{equation}
\rho(t,\vec{x})=\rho(t,r)=\frac{f(\frac{r}{a(t)})}{a(t)^{2}},\text{ }\vec
{u}(t,\vec{x})=\frac{\dot{a}(t)}{a(t)}\left(  x,y\right)  +\frac{G(t,r)}%
{r}(-y,x) \label{generalsolutionformassequation2Drotation}%
\end{equation}
with the radial $r=\sqrt{x^{2}+y^{2}}$ and arbitrary functions $f\geq0$; $G$
and $a(t)>0\in C^{1}.$

\begin{proof}
The functional structure%
\begin{equation}
\rho(t,\vec{x})=\rho(t,r),\text{ }\vec{u}(t,\vec{x})=\frac{F(t,r)}{r}\left(
x,y\right)  +\frac{G(t,r)}{r}(-y,x)
\end{equation}
with an arbitrary $C^{1}$ function $F(t,r),$ can be plugged into the mass
equation (\ref{massequationspherical2Drotation}) to verify the result:
\begin{equation}
\rho_{t}+\nabla\cdot\left(  \rho\vec{u}\right)
\end{equation}%
\begin{equation}
=\rho_{t}+\frac{\partial}{\partial x}\left[  \rho\frac{Fx}{r}-\rho\frac{Gy}%
{r}\right]  +\frac{\partial}{\partial y}\left[  \rho\frac{Fy}{r}+\rho\frac
{Gx}{r}\right]
\end{equation}%
\begin{align}
&  =\rho_{t}+\frac{\partial}{\partial x}\rho\frac{Fx}{r}+\rho\left(
\frac{\partial}{\partial x}\frac{Fx}{r}\right)  -(\frac{\partial}{\partial
x}\rho)\frac{Gy}{r}-\rho(\frac{\partial}{\partial x}\frac{Gy}{r})\nonumber\\
&  +\frac{\partial}{\partial y}\rho\frac{Fy}{r}+\rho\left(  \frac{\partial
}{\partial y}\frac{Fy}{r}\right)  +(\frac{\partial}{\partial y}\rho)\frac
{Gx}{r}+\rho(\frac{\partial}{\partial y}\frac{Gx}{r})
\end{align}%
\begin{align}
&  =\rho_{t}+\rho_{r}\frac{x}{r}\frac{Fx}{r}+\rho\left(  F_{r}\frac{x}%
{r}\right)  \frac{x}{r}+\rho\frac{F}{r}-\rho Fx\frac{x}{r^{3}}\nonumber\\
&  -\rho_{r}\frac{x}{r}\frac{Gy}{r}-\rho G_{r}\frac{x}{r}\frac{y}{r}+\rho
Gy\frac{x}{r^{3}}+\rho_{r}\frac{y}{r}\frac{Fy}{r}+\rho\left(  F_{r}\frac{y}%
{r}\right)  \frac{y}{r}\nonumber\\
&  +\rho\frac{F}{r}-\rho Fy\frac{y}{r^{3}}+\rho_{r}\frac{y}{r}\frac{Gx}%
{r}+\rho\left(  G_{r}\frac{y}{r}\right)  \frac{x}{r}-\rho Gx\frac{y}{r^{3}}%
\end{align}%
\begin{align}
&  =\rho_{t}+\rho_{r}\frac{x}{r}\frac{Fx}{r}+\rho\left(  F_{r}\frac{x}%
{r}\right)  \frac{x}{r}+\rho\frac{F}{r}-\rho Fx\frac{x}{r^{3}}\nonumber\\
&  +\rho_{r}\frac{y}{r}\frac{Fy}{r}+\rho\left(  F_{r}\frac{y}{r}\right)
\frac{y}{r}+\rho\frac{F}{r}-\rho Fy\frac{y}{r^{3}}%
\end{align}%
\begin{equation}
=\rho_{t}+\rho_{r}F+\rho F_{r}+\rho F\frac{1}{r}. \label{tomassradial}%
\end{equation}
Then, the self-similar structure is taken for the density function,%
\begin{equation}
\rho(t,\vec{x})=\rho(t,r)=\frac{f(\frac{r}{a(t)})}{a(t)^{2}}%
\end{equation}
and $F(t,r)=\frac{\dot{a}(t)}{a(t)}r$ for velocity $\vec{u}$ to balance
equation (\ref{tomassradial}) \cite{YuenJMAA2008a}:%
\begin{equation}
=\frac{\partial}{\partial t}\frac{f(\frac{r}{a(t)})}{a(t)^{2}}+\frac{\partial
}{\partial r}\frac{f(\frac{r}{a(t)})}{a(t)^{2}}\frac{\dot{a}(t)r}{a(t)}%
+\frac{f(\frac{r}{a(t)})}{a(t)^{2}}\frac{\dot{a}(t)}{a(t)}+\frac{f(\frac
{r}{a(t)})}{a(t)^{2}}\frac{\dot{a}(t)}{a(t)}%
\end{equation}%
\begin{align}
&  =\frac{-2\overset{\cdot}{a}(t)f(r/a(t))}{a(t)^{3}}-\frac{\overset{\cdot}%
{a}(t)r\overset{\cdot}{f}(r/a(t))}{a(t)^{4}}\nonumber\\
&  +\frac{\overset{\cdot}{f}(r/a(t))}{a(t)^{3}}\frac{\overset{\cdot}{a}%
(t)r}{a(t)}+\frac{f(r/a(t))}{a(t)^{2}}\frac{\overset{\cdot}{a}(t)}{a(t)}%
+\frac{f(r/a(t))}{a(t)^{2}}\frac{\overset{\cdot}{a}(t)}{a(t)}%
\end{align}%
\begin{equation}
=0.
\end{equation}
The proof is completed.
\end{proof}
\end{lemma}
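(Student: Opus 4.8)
The plan is to verify directly that the density and velocity proposed in (\ref{2-Dg>1Rotation}) satisfy both equations of the system (\ref{e1}), handling the continuity and momentum equations separately. For the continuity equation (\ref{e1})$_{1}$, I would simply invoke Lemma \ref{lem:generalsolutionformasseqrotation2d}: choosing $f(r/a(t))=\max\left(\left(-\frac{\lambda(\gamma-1)}{2K\gamma}(r/a(t))^{2}+\alpha\right)^{1/(\gamma-1)},\,0\right)$ and $G(t,r)=\xi r/a(t)^{2}$, so that $G/r=\xi/a(t)^{2}$, the expressions in (\ref{2-Dg>1Rotation}) are exactly of the form (\ref{generalsolutionformassequation2Drotation}). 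Because $f\geq0$ and $G,a\in C^{1}$ with $a>0$, the lemma guarantees the mass equation holds with no additional constraint; the entire burden of the theorem thus falls on the momentum equation.

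Second, I would compute the material acceleration $\vec{u}_{t}+(\vec{u}\cdot\nabla)\vec{u}$ of the velocity field. Writing $A=\dot{a}/a$ and $B=\xi/a^{2}$, so that $\vec{u}=(Ax-By,\,Ay+Bx)$, and using the identities $\dot{A}=\ddot{a}/a-A^{2}$ and $\dot{B}=-2AB$, a direct expansion of each component organizes the acceleration into a radial and an angular part, $\vec{u}_{t}+(\vec{u}\cdot\nabla)\vec{u}=(\dot{A}+A^{2}-B^{2})(x,y)+(\dot{B}+2AB)(-y,x)$. The angular coefficient $\dot{B}+2AB$ vanishes identically, leaving the purely radial acceleration $(\ddot{a}/a-\xi^{2}/a^{4})(x,y)$. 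This cancellation is the structural reason the swirl term $\xi(-y,x)/a^{2}$ is compatible with radial symmetry, and it is where the $\xi^{2}/a^{3}$ contribution to the governing ODE is born.

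Third, I would evaluate the pressure gradient $K\nabla\rho^{\gamma}$ on the support of $\rho$. Writing $\rho=P(s)/a^{2}$ with $P(s)=(cs+\alpha)^{1/(\gamma-1)}$ and $c=-\lambda(\gamma-1)/(2K\gamma)$, the decisive observation is that $P^{\gamma-1}=cs+\alpha$ is affine in $s$, so the chain-rule factor collapses: $\frac{d}{ds}P^{\gamma}=\gamma P^{\gamma-1}P'=\frac{\gamma c}{\gamma-1}P$, which is again proportional to $P$ itself. Combined with $\nabla s=\frac{2}{a^{2}}(x,y)$, this gives $K\nabla\rho^{\gamma}=-\frac{\lambda}{a^{2\gamma+2}}P(s)(x,y)$. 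I expect this to be the crux of the argument: the affine choice of $P^{\gamma-1}$ is exactly what strips all $s$-dependence from the coefficient, rendering the pressure gradient proportional to $\rho(x,y)$ and hence matchable against the radial acceleration term by term.

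Finally, substituting the results of the two previous steps into the momentum equation (\ref{e1})$_{2}$ and cancelling the common factor $P(s)(x,y)$ reduces the entire vector equation to the single scalar identity $\frac{1}{a^{2}}(\frac{\ddot{a}}{a}-\frac{\xi^{2}}{a^{4}})-\frac{\lambda}{a^{2\gamma+2}}=0$; multiplying through by $a^{3}$ yields precisely the Emden-type equation $\ddot{a}-\frac{\xi^{2}}{a^{3}}-\frac{\lambda}{a^{2\gamma-1}}=0$ imposed in (\ref{2-Dg>1Rotation}). It remains to address the free boundary where the $\max$ switches the density to zero. Since $\gamma>1$, the pressure carries the exponent $\gamma/(\gamma-1)>1$, so $\rho^{\gamma}$ is $C^{1}$ and $\nabla\rho^{\gamma}\to0$ at the vacuum interface, while $\rho\to0$ there as well; both terms of (\ref{e1})$_{2}$ therefore vanish continuously across the interface and trivially inside the vacuum region, so the functions in (\ref{2-Dg>1Rotation}) furnish a genuine solution on all of $R^{2}$. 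The only real obstacle is organizing the momentum computation so that the radial and rotational structures separate cleanly; once the profile $P$ is chosen as above, the reduction to the ODE is forced.
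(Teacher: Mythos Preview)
Your proposal addresses the wrong statement. The lemma asks you to verify that the general radial-plus-rotational ansatz $\rho=f(r/a)/a^{2}$, $\vec{u}=\frac{\dot a}{a}(x,y)+\frac{G(t,r)}{r}(-y,x)$ satisfies the \emph{mass} equation $\rho_{t}+\nabla\cdot(\rho\vec{u})=0$ for arbitrary $f\ge 0$, $G$, and $a>0$. Instead you have sketched a proof of Theorem~\ref{thm:1}: in your first paragraph you \emph{invoke} Lemma~\ref{lem:generalsolutionformasseqrotation2d} to dispose of the continuity equation, and the remaining three paragraphs are devoted to the momentum equation, the pressure gradient $K\nabla\rho^{\gamma}$, the Emden ODE $\ddot a-\xi^{2}/a^{3}-\lambda/a^{2\gamma-1}=0$, and the vacuum interface. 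None of that belongs to the lemma; the lemma concerns only (\ref{massequationspherical2Drotation}), and there is no pressure, no momentum balance, and no Emden equation in its statement.

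What the lemma actually requires is the computation the paper carries out: expand $\nabla\cdot(\rho\vec{u})$ in Cartesian coordinates, observe that every term containing $G$ cancels (the rotational field $\frac{G}{r}(-y,x)$ is divergence-free and, since $\rho$ is radial, $\nabla\rho$ is orthogonal to $(-y,x)$), reduce to the one-dimensional radial continuity equation $\rho_{t}+\rho_{r}F+\rho F_{r}+\rho F/r=0$ with $F=\dot a\,r/a$, and then check by the chain rule that the self-similar density $\rho=f(r/a)/a^{2}$ makes this expression vanish identically for any $f$. Your write-up contains none of these steps; it assumes the lemma and proceeds to the theorem. As a proof of the lemma it is therefore circular.
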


The computational proof for Theorem 1 is as follows.

\begin{proof}
[Proof of Theorem 1]The procedure for the proof for vortical fluids is similar
to that for non-vortical fluids \cite{Makino93exactsolutions},
\cite{YuenCNSNS2012}. It is clear that the following function
\begin{equation}
\rho(t,\vec{x})=\frac{f(s)}{a(t)^{2}},\text{ }{\normalsize \vec{u}(t,\vec
{x})=}\frac{\overset{\cdot}{a}(t)}{a(t)}(x,y){+}\frac{\xi}{a(t)^{2}}(-y,x),
\end{equation}
with an arbitrary $C^{1}$ function $f\geq0$ of the self-similar variable
$s=\frac{x^{2}+y^{2}}{a(t)^{2}}=\frac{r^{2}}{a(t)^{2}}$ and a undetermined
$C^{2}$ time-function $a(t)>0$, satisfies Lemma
\ref{lem:generalsolutionformasseqrotation2d} for the mass equation
(\ref{e1})$_{1}$. For the first momentum equation (\ref{e1})$_{21}$, we
obtain:%
\begin{equation}
=\rho\left(  \frac{\partial}{\partial t}u_{1}+u_{1}\frac{\partial u_{1}%
}{\partial x}+u_{2}\frac{\partial u_{1}}{\partial y}\right)  +K\gamma
\rho^{\gamma-1}\frac{\partial}{\partial x}\rho
\end{equation}%
\begin{equation}
=\rho\left(  \frac{\partial}{\partial t}u_{1}+u_{1}\frac{\partial u_{1}%
}{\partial x}+u_{2}\frac{\partial u_{1}}{\partial y}\right)  +K\gamma
\frac{f\left(  s\right)  ^{\gamma-1}}{a(t)^{2(\gamma-1)}}\frac{1}{a(t)^{2}%
}\frac{\partial}{\partial x}f(s)
\end{equation}%
\begin{equation}
=\rho\left[  \frac{\partial}{\partial t}u_{1}+u_{1}\frac{\partial u_{1}%
}{\partial x}+u_{2}\frac{\partial u_{1}}{\partial y}+K\gamma\frac
{f(s)^{\gamma-2}}{a(t)^{2(\gamma-1)}}\frac{2x}{a(t)^{2}}\dot{f}(s)\right]
\end{equation}%
\begin{equation}
=\rho\left[
\begin{array}
[c]{c}%
\frac{\partial}{\partial t}\left(  \frac{\dot{a}(t)}{a(t)}x-\frac{\xi
}{a(t)^{2}}y\right)  +\left(  \frac{\dot{a}(t)}{a(t)}x-\frac{\xi}{a(t)^{2}%
}y\right)  \frac{\partial}{\partial x}\left(  \frac{\dot{a}(t)}{a(t)}%
x-\frac{\xi}{a(t)^{2}}y\right) \\
+\left(  \frac{\xi}{a(t)^{2}}x+\frac{\dot{a}(t)}{a(t)}y\right)  \frac
{\partial}{\partial y}\left(  \frac{\dot{a}(t)}{a(t)}x-\frac{\xi}{a(t)^{2}%
}y\right)  +K\gamma\frac{f(s)^{\gamma-2}}{a(t)^{2\gamma}}2x\dot{f}(s)
\end{array}
\right]
\end{equation}%
\begin{equation}
=\rho\left[
\begin{array}
[c]{c}%
\left(  \frac{\ddot{a}(t)}{a(t)}-\frac{\dot{a}(t)^{2}}{a(t)^{2}}\right)
x+\frac{2\xi\dot{a}(t)}{a(t)^{3}}y+\left(  \frac{\dot{a}(t)}{a(t)}x-\frac{\xi
}{a(t)^{2}}y\right)  \frac{\dot{a}(t)}{a(t)}\\
+\left(  \frac{\xi}{a(t)^{2}}x+\frac{\dot{a}(t)}{a(t)}y\right)  \left(
-\frac{\xi}{a(t)^{2}}\right)  +K\gamma\frac{f(s)^{\gamma-2}}{a(t)^{2\gamma}%
}2x\dot{f}(s)
\end{array}
\right]
\end{equation}%
\begin{equation}
=\rho\left[  \left(  \frac{\ddot{a}(t)}{a(t)}-\frac{\xi^{2}}{a(t)^{4}}\right)
x+K\gamma\frac{f(s)^{\gamma-2}}{a(t)^{2\gamma}}2x\dot{f}(s)\right]
\end{equation}%
\begin{equation}
=\frac{\rho x}{a(t)^{2\gamma}}\left[  \lambda+2K\gamma f(s)^{\gamma-2}\dot
{f}(s)\right]  , \label{eq58}%
\end{equation}
with the Emden equation%
\begin{equation}
\left\{
\begin{array}
[c]{c}%
\ddot{a}(t)-\frac{\xi^{2}}{a(t)^{3}}=\frac{\lambda}{a(t)^{2\gamma-1}}\\
a(0)=a_{0}>0\text{, }\dot{a}(0)=a_{1},
\end{array}
\right.  \label{Endemeqeq2-Drotation}%
\end{equation}
with arbitrary constants $\xi$ and $\lambda>1$.\newline To ensure that
equation (\ref{eq58}) is zero, the following ordinary differential equation is
required:%
\begin{equation}
\left\{
\begin{array}
[c]{c}%
\lambda+2K\gamma f(s)^{\gamma-2}\dot{f}(s)=0\\
f(0)=\alpha>0\text{, }\dot{f}(0)=0.
\end{array}
\right.  \label{Liouville2-DRotation}%
\end{equation}
The function $f(s)$ in the foregoing ordinary differential equation can be
solved exactly:%
\begin{equation}
f(s)=\left(  -\frac{\lambda(\gamma-1)}{2K\gamma}s+\alpha\right)  ^{\frac
{1}{\gamma-1}}. \label{fexact}%
\end{equation}
\newline For the second momentum equation (\ref{e1})$_{22}$, we also have:%
\begin{equation}
=\rho\left(  \frac{\partial}{\partial t}u_{2}+u_{1}\frac{\partial u_{2}%
}{\partial x}+u_{2}\frac{\partial u_{2}}{\partial y}\right)  +K\gamma
\rho^{\gamma-1}\frac{\partial}{\partial y}\rho
\end{equation}%
\begin{equation}
=\rho\left[  \frac{\partial}{\partial t}u_{2}+u_{1}\frac{\partial u_{2}%
}{\partial x}+u_{2}\frac{\partial u_{2}}{\partial y}+K\gamma\frac
{f(s)^{\gamma-2}}{a(t)^{2(\gamma-1)}}\frac{2y}{a(t)^{2}}\dot{f}(s)\right]
\end{equation}%
\begin{equation}
=\rho\left[
\begin{array}
[c]{c}%
\frac{\partial}{\partial t}\left(  \frac{\xi}{a(t)^{2}}x+\frac{\dot{a}%
(t)}{a(t)}y\right)  +\left(  \frac{\dot{a}(t)}{a(t)}x-\frac{\xi}{a(t)^{2}%
}y\right)  \frac{\partial}{\partial x}\left(  \frac{\xi}{a(t)^{2}}x+\frac
{\dot{a}(t)}{a(t)}y\right) \\
+\left(  \frac{\xi}{a(t)^{2}}x+\frac{\dot{a}(t)}{a(t)}y\right)  \frac
{\partial}{\partial y}\left(  \frac{\xi}{a(t)^{2}}x+\frac{\dot{a}(t)}%
{a(t)}y\right)  +K\gamma\frac{f(s)^{\gamma-2}}{a(t)^{2\gamma}}2y\dot{f}(s)
\end{array}
\right]
\end{equation}%
\begin{equation}
=\rho\left[
\begin{array}
[c]{c}%
-\frac{2\xi\dot{a}(t)}{a(t)^{3}}x+\left(  \frac{\ddot{a}(t)}{a(t)}-\frac
{\dot{a}(t)^{2}}{a(t)^{2}}\right)  y+\left(  \frac{\dot{a}(t)}{a(t)}%
x-\frac{\xi}{a(t)^{2}}y\right)  \frac{\xi}{a(t)^{2}}\\
+\left(  \frac{\xi}{a(t)^{2}}x+\frac{\dot{a}(t)}{a(t)}y\right)  \frac{\dot
{a}(t)}{a(t)}+K\gamma\frac{f(s)^{\gamma-2}}{a(t)^{2\gamma}}2y\dot{f}(s)
\end{array}
\right]
\end{equation}%
\begin{equation}
=\rho\left[  \left(  \frac{\ddot{a}(t)}{a(t)}-\frac{\xi^{2}}{a(t)^{4}}\right)
y+K\gamma\frac{f(s)^{\gamma-2}}{a(t)^{2\gamma}}2y\dot{f}(s)\right]
\end{equation}%
\begin{equation}
=\frac{\rho y}{a(t)^{2\gamma}}\left[  \lambda+2K\gamma f(s)^{\gamma-2}\dot
{f}(s)\right]
\end{equation}%
\begin{equation}
=0.
\end{equation}
with the Emden equation (\ref{Endemeqeq2-Drotation}) and function
(\ref{fexact}).

Therefore, for $\xi\neq0$, we have the vortical and self-similar flows for the
compressible Euler equations (\ref{e1}) in 2D \cite{Makino93exactsolutions}%
.\newline To ensure the non-negative density function, we require that%
\begin{equation}
\rho(t,\vec{x})=\frac{\max\left(  \left(  -\frac{\lambda(\gamma-1)}{2K\gamma
}s+\alpha\right)  ^{\frac{1}{\gamma-1}},\text{ }0\right)  }{a(t)^{2}}.
\end{equation}

This completes the proof.
\end{proof}

\begin{lemma}
For the Emden equation,
\begin{equation}
\left\{
\begin{array}
[c]{c}%
\ddot{a}(t)-\frac{\xi^{2}}{a(t)^{3}}-\frac{\lambda}{a(t)^{2\gamma-1}}=0\\
a(0)=a_{0}>0,\dot{a}(0)=a_{1},
\end{array}
\right.  \label{ModifiedEmden}%
\end{equation}
with arbitrary constants $\xi\neq0$, $\lambda$ and $\gamma>1$, the total
energy is%
\begin{equation}
E(t):=\frac{\dot{a}(t)^{2}}{2}+\frac{\xi^{2}}{2a(t)^{2}}+\frac{\lambda
}{(2\gamma-2)a(t)^{2\gamma-2}}.
\end{equation}
We have the following.\newline(1) For $1<\gamma<2$, if $E(0)<0$, the solution
is time-periodic; otherwise, the solution is global.\newline(2) For
$\gamma=2,$\newline(2a) with $\xi^{2}\geq-\lambda$, the solution is
global;\newline(2b) with $\xi^{2}<-\lambda,$ the solution blows up at a finite
time if%
\begin{equation}
a_{1}<\frac{\sqrt{-\lambda-\xi^{2}}}{a_{0}};
\end{equation}
otherwise, the solution is global.\newline(3) For $\gamma>2,$\newline(3a) with
$\lambda\geq0$, the solution is global;\newline(3b) with $\lambda<0$ and a
constant $a_{Max}=\left(  \frac{-\lambda}{\xi^{2}}\right)  ^{\frac{1}%
{2\gamma-4}}$,\newline(3bI) and $a_{0}\geq a_{Max}$, \newline if $E(0)\leq
F_{pot}(a_{Max})$ or $E(0)>F_{pot}(a_{Max})$ with $a_{1}\geq0$, the solution
is global; otherwise, the solution blows up at a finite time.\newline(3bII)
and $a_{0}<a_{Max}$, \newline if $E(0)\geq F_{pot}(a_{Max})$ with $a_{1}>0$,
the solution is global; otherwise, the solution blows up at a finite time.
\end{lemma}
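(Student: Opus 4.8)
The plan is to read off every regime of the modified Emden equation (\ref{ModifiedEmden}) from its energy first integral, viewing it as a one--dimensional conservative mechanical system. First I would multiply (\ref{ModifiedEmden}) by $\dot a(t)$ and integrate once; this confirms that the quantity $E(t)$ of the statement is conserved, $E(t)\equiv E(0)$, and yields $\dot a(t)^{2}=2\bigl(E(0)-F_{pot}(a)\bigr)$ with the potential $F_{pot}(a):=\dfrac{\xi^{2}}{2a^{2}}+\dfrac{\lambda}{(2\gamma-2)a^{2\gamma-2}}$. Thus $a(t)$ is confined to the accessible set $\{a>0:F_{pot}(a)\le E(0)\}$, and on each interval where $\dot a\neq0$ the equation reduces to the separable first--order ODE $da\big/\sqrt{2(E(0)-F_{pot}(a))}=\pm\,dt$; every qualitative claim becomes a statement about the graph of $F_{pot}$ and the convergence of that time integral near the endpoints of the accessible set.

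Next I would determine the graph of $F_{pot}$ in the three $\gamma$-regimes. Because $\xi\neq0$: for $1<\gamma<2$ we have $F_{pot}(a)\to+\infty$ as $a\to0^{+}$; for $\gamma=2$ the two singular terms merge into $F_{pot}(a)=\dfrac{\xi^{2}+\lambda}{2a^{2}}$; for $\gamma>2$ the sign of $\lambda$ decides whether $F_{pot}(0^{+})=+\infty$ ($\lambda\ge0$) or $-\infty$ ($\lambda<0$); and always $F_{pot}(a)\to0$ as $a\to\infty$. Since $F_{pot}'(a)=-a^{-(2\gamma-1)}\bigl(\xi^{2}a^{2\gamma-4}+\lambda\bigr)$, $F_{pot}$ has a unique interior critical point precisely when $\lambda<0$ and $\gamma\neq2$, at $a_{Max}=\bigl(-\lambda/\xi^{2}\bigr)^{1/(2\gamma-4)}$, with $F_{pot}(a_{Max})=\dfrac{\xi^{2}(\gamma-2)}{(2\gamma-2)\,a_{Max}^{2}}$; this is a strict local maximum when $\gamma>2$ and a strict local minimum when $1<\gamma<2$. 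Hence $F_{pot}$ is: a monotone, strictly positive, decreasing well when $\lambda\ge0$ (any $\gamma$) or $\gamma=2$ with $\xi^{2}\ge-\lambda$; a genuine well dipping below $0$ when $1<\gamma<2$, $\lambda<0$; an everywhere--negative increasing profile when $\gamma=2$, $\xi^{2}<-\lambda$; and, when $\gamma>2$, $\lambda<0$, an increasing branch on $(0,a_{Max})$ from $-\infty$ up to the barrier height $F_{pot}(a_{Max})>0$ followed by a decreasing branch back to $0^{+}$.

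Then I would apply the standard potential--well dichotomy, which needs only three elementary facts: (i) if the accessible set is a compact interval $[a_{-},a_{+}]\subset(0,\infty)$ with simple turning points ($F_{pot}'(a_{\pm})\neq0$), the orbit is periodic with finite period $T=2\!\int_{a_{-}}^{a_{+}}da\big/\sqrt{2(E(0)-F_{pot}(a))}$, the inverse--square--root endpoint singularities being integrable; (ii) if the accessible set contains a half--line $[a_{-},\infty)$, $a_{-}>0$, then $a(t)$ is bounded below and, since $\dot a^{2}\to2E(0)$ as $a\to\infty$, grows at most linearly, hence the solution is global; (iii) if the orbit is directed toward $a=0$ within its accessible region, then near $a=0$ one has $\dot a\sim-C\,a^{-(\gamma-1)}$ (respectively $\dot a\sim-C/a$ when $\gamma=2$, $\xi^{2}<-\lambda$), so $a^{\gamma}\sim-C'\,t$ and $a$ reaches $0$ in finite time --- a finite--time blowup of the density $\rho\sim a^{-2}$. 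Matching to the profiles above then gives the six items. For $1<\gamma<2$: $\lambda\ge0$ forces $F_{pot}>0$, hence $E(0)>0$, a half--line, and global existence by (ii); $\lambda<0$ gives a periodic orbit (by (i)) when $E(0)<0$ traps $a$ in the compact well, and a half--line (global by (ii)) when $E(0)\ge0$ --- this is (1). For $\gamma=2$: $\xi^{2}\ge-\lambda$ makes $F_{pot}\ge0$, so $a$ is bounded below and global (2a); for $\xi^{2}<-\lambda$, evaluating $E(0)=\tfrac{a_{1}^{2}}{2}+\tfrac{\xi^{2}+\lambda}{2a_{0}^{2}}$ (with $\xi^{2}+\lambda<0$) shows the orbit is driven into $a=0$ exactly when $a_{1}<\sqrt{-\lambda-\xi^{2}}\,/a_{0}$ (then either $E(0)<0$ traps $a$ in a bounded interval that it falls out of, or $E(0)\ge0$ but $a_1<0$ sends $a$ monotonically to $0$), and otherwise $a_1\ge\sqrt{-\lambda-\xi^2}/a_0>0$ keeps $\dot a$ positive with $a\to\infty$ (2b). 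For $\gamma>2$: $\lambda\ge0$ gives $F_{pot}>0$ monotone, so global (3a); for $\lambda<0$ the barrier at $a_{Max}$ governs --- when $E(0)\le F_{pot}(a_{Max})$ the inner region and the outer half--line are disconnected, so an orbit with $a_{0}\ge a_{Max}$ stays outside and is global while one with $a_{0}<a_{Max}$ is trapped inside and collapses; when $E(0)>F_{pot}(a_{Max})$ the barrier is surmountable and the sign of $a_{1}$ decides (outward $a_{1}\ge0$: global; inward $a_{1}<0$: collapse) --- exactly (3bI)--(3bII).

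The main obstacle is the bookkeeping of borderline configurations, not any single hard step: I must check that the turning points in the periodic regime are genuinely simple so the period integral converges; that the inward collapse occurs in finite, not merely infinite, time (integrability of $a^{\gamma-1}\,da$, respectively $a\,da$, near $0$); that outward orbits cannot escape to $a=\infty$ in finite time (the linear bound from $\dot a^{2}\to2E(0)$); and, most delicately, that the critical energy $E(0)=F_{pot}(a_{Max})$ and the critical velocity $a_{1}=0$ fall on the correct side of each dichotomy --- for instance, an orbit with $E(0)=F_{pot}(a_{Max})$ tending to $a_{Max}$ does so only asymptotically, hence counts as global. Treating $\gamma=2$ separately, where the two singular terms coincide, keeps that case transparent.
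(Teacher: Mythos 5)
Your proposal is correct and follows essentially the same route as the paper: multiply by $\dot a$, obtain the conserved energy $E(t)$, and read off each regime from the shape of the potential $F_{pot}$ via the standard potential-well dichotomy (the paper simply cites ``the classical energy method for second-order autonomous ODEs'' and asserts the case analysis, so your write-up actually supplies more of the supporting detail --- the formula for $F_{pot}'$, the location and sign of the critical value $F_{pot}(a_{Max})$, the finite-time collapse estimates near $a=0$, and the treatment of the borderline energies). The one caveat is case (2a): when $\xi^{2}=-\lambda$ the potential vanishes identically and your argument ``$F_{pot}\ge 0$ hence $a$ is bounded below'' fails, since then $a(t)=a_{0}+a_{1}t$ reaches zero in finite time for $a_{1}<0$; this is a defect of the lemma as stated rather than of your method, and the paper itself silently corrects it in the subsequent corollary by splitting (2a) according to the sign of $a_{1}$.
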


\begin{proof}
For equation (\ref{ModifiedEmden}), we multiply $\dot{a}(t)$ and then
integrate it, as follows:%
\begin{equation}
\frac{\dot{a}(t)^{2}}{2}+\frac{\xi^{2}}{2a(t)^{2}}+\frac{\lambda}%
{(2\gamma-2)a(t)^{2\gamma-2}}=E(t),\label{rt1}%
\end{equation}
with a constant $E(0)=\frac{a_{1}^{2}}{2}+\frac{\xi^{2}}{2a_{0}^{2}}%
+\frac{\lambda}{(2\gamma-2)a_{0}^{2\gamma-2}}$.\newline We define the kinetic
energy as:
\begin{equation}
F_{kin}:=\frac{\dot{a}(t)^{2}}{2},
\end{equation}
and the potential energy as:%
\begin{equation}
F_{pot}=\frac{\xi^{2}}{2a(t)^{2}}+\frac{\lambda}{(2\gamma-2)a(t)^{2\gamma-2}%
}.\label{potential}%
\end{equation}
The total energy is conserved thusly:%
\begin{equation}
\frac{dE(t)}{dt}=\frac{d}{dt}(F_{kin}+F_{pot})=0.
\end{equation}
The classical energy method for second-order autonomous ordinary differential
equations (which readers can refer to pages 793--798 in \cite{NDE}), can be
applied to analyze the corresponding qualitative properties of the Emden
equation (\ref{ModifiedEmden}).\newline(1) For $1<\gamma<2$, there exists a
unique global minimum for the potential function (\ref{potential}) for
$a(t)>0,$ and $\underset{a(t)\rightarrow0^{+}}{\lim}F_{pot}(a(t))=+\infty$ and
$\underset{a(t)\rightarrow+\infty}{\lim}F_{pot}(a(t))=0$. If%
\begin{equation}
E(0)=\frac{a_{1}^{2}}{2}+\frac{\xi^{2}}{2a_{0}^{2}}+\frac{\lambda}%
{(2\gamma-2)a_{0}^{2\gamma-2}}<0,
\end{equation}
the solution is time-periodic; otherwise, it is global.$\newline$(2) For
$\gamma=2,$ the ordinary differential equation (\ref{ModifiedEmden})
degenerates into the classical Emden equation%
\begin{equation}
\ddot{a}(t)=\frac{\xi^{2}+\lambda}{a(t)^{3}}.
\end{equation}
(2a) For $\xi^{2}\geq-\lambda$, the solution is global.\newline(2b) For
$\xi^{2}<-\lambda,$ the potential function (\ref{potential}) is strictly
increasing with $\underset{a(t)\rightarrow+\infty}{\lim}F_{pot}(a(t))=0$. The
solution blows up at a finite time, if%
\begin{equation}
a_{1}<\frac{\sqrt{-\lambda-\xi^{2}}}{a_{0}};
\end{equation}
otherwise, the solution is global.$\newline$(3) For $\gamma>2,$\newline(3a)
with $\lambda\geq0$, the potential function (\ref{potential}) is a decreasing
function of $a(t)>0$. Thus, the solution is global.\newline(3b) With
$\lambda<0$, the potential function (\ref{potential}) achieves a unique global
maximum, $F_{pot}(a_{Max})$, with $a_{Max}=\left(  \frac{-\lambda}{\xi^{2}%
}\right)  ^{\frac{1}{2\gamma-4}}$ for $a(t)>0$. $\newline$(3bI) And $a_{0}\geq
a_{Max}$, \newline if $E(0)\leq F_{pot}(a_{Max})$ or $E(0)>F_{pot}(a_{Max})$
with $a_{1}\geq0$, the solution is global; otherwise, the solution blows up at
a finite time.\newline(3bII) and $a_{0}<a_{Max}$, \newline if $E(0)\geq
F_{pot}(a_{Max})$ with $a_{1}>0$, the solution is global; otherwise the
solution blows up at a finite time.

The proof is now complete.
\end{proof}

The foregoing lemma makes it easy to determine the blowup or global existence
of the corresponding solutions (\ref{2-Dg>1Rotation}) for the compressible
Euler equations (\ref{e1}) in 2D.

\begin{corollary}
The total energy is defined%
\begin{equation}
E(t):=\frac{\dot{a}(t)^{2}}{2}+\frac{\xi^{2}}{2a(t)^{2}}+\frac{\lambda
}{(2\gamma-2)a(t)^{2\gamma-2}},
\end{equation}
for the Emden equation (\ref{2-Dg>1Rotation})$_{3}$.\newline(1) For
$1<\gamma<2$, if $E(0)<0$, solution (\ref{2-Dg>1Rotation}) is time-periodic;
otherwise, solution (\ref{2-Dg>1Rotation}) is global.\newline(2) For
$\gamma=2,$\newline(2aI) with $\xi^{2}>-\lambda$ or $\xi^{2}=-\lambda$ and
$a_{1}\geq0$, solution (\ref{2-Dg>1Rotation}) is global;\newline(2aII) with
$\xi^{2}=-\lambda$ and $a_{1}<0$, solution (\ref{2-Dg>1Rotation}) blows up at
$T=-a_{1}/a_{0}$.\newline(2b) with $\xi^{2}<-\lambda,$ solution
(\ref{2-Dg>1Rotation}) blows up at a finite time if%
\begin{equation}
a_{1}<\frac{\sqrt{-\lambda-\xi^{2}}}{a_{0}};
\end{equation}
otherwise, solution (\ref{2-Dg>1Rotation}) is global.\newline(3) For
$\gamma>2,$\newline(3a) with $\lambda\geq0$, solution (\ref{2-Dg>1Rotation})
is global;\newline(3b) with $\lambda<0$ and a constant $a_{Max}=\left(
\frac{-\lambda}{\xi^{2}}\right)  ^{\frac{1}{2\gamma-4}}$,\newline(3bI) and
$a_{0}\geq a_{Max}$, \newline if $E(0)\leq F_{pot}(a_{Max})$ or $E(0)>F_{pot}%
(a_{Max})$ with $a_{1}\geq0$, solution (\ref{2-Dg>1Rotation}) is global;
otherwise, solution (\ref{2-Dg>1Rotation}) blows up at a finite time.\newline%
(3bII) and $a_{0}<a_{Max}$, \newline if $E(0)\geq F_{pot}(a_{Max})$ with
$a_{1}>0$, solution (\ref{2-Dg>1Rotation}) is global; otherwise, solution
(\ref{2-Dg>1Rotation}) blows up at a finite time.
\end{corollary}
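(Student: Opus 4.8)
The plan is to reduce the statement to the Emden-equation Lemma just proved, via one elementary dictionary between $a(t)$ and the flow (\ref{2-Dg>1Rotation}). By Theorem \ref{thm:1}, the pair $(\rho,\vec{u})$ in (\ref{2-Dg>1Rotation}) solves the Euler system (\ref{e1}) on every interval on which $a(t)$ is a positive $C^{2}$ function; moreover the central density equals $\alpha^{1/(\gamma-1)}/a(t)^{2}$ and the rotational part of $\vec{u}$ is of size $O(1/a(t)^{2})$, so $(\rho,\vec{u})$ stays bounded exactly while $a(t)$ stays bounded away from $0$ and blows up as soon as $a(t)\downarrow 0$. Since $F_{pot}(a)\to 0$ as $a\to+\infty$ in every case, $\dot{a}^{2}=2(E(0)-F_{pot}(a))$ stays bounded for large $a$, so $a$ cannot reach $+\infty$ in finite time; hence the only finite-time loss of existence of $a$ is a collapse $a\to 0$. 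Consequently ``(\ref{2-Dg>1Rotation}) is global'' $\Longleftrightarrow$ ``$a$ extends to a positive $C^{2}$ function on $[0,\infty)$'', ``(\ref{2-Dg>1Rotation}) blows up at finite time $T$'' $\Longleftrightarrow$ ``$a(t)\to 0$ as $t\to T^{-}$'', and ``(\ref{2-Dg>1Rotation}) is time-periodic'' $\Longleftrightarrow$ ``$a$ is a nonconstant periodic positive function''.

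With this dictionary, everything reduces to the scalar autonomous equation (\ref{2-Dg>1Rotation})$_{3}$, which is (\ref{ModifiedEmden}). Multiplying by $\dot{a}$ and integrating gives the conserved quantity $E(t)=\frac{\dot{a}^{2}}{2}+F_{pot}(a)$ with $F_{pot}(a)=\frac{\xi^{2}}{2a^{2}}+\frac{\lambda}{(2\gamma-2)a^{2\gamma-2}}$, so the orbit lies on $\dot{a}^{2}=2(E(0)-F_{pot}(a))\geq 0$. I would then record the shape of $F_{pot}$ on $(0,\infty)$: for $1<\gamma<2$ it tends to $+\infty$ at $0^{+}$, to $0$ at $+\infty$, with a unique global minimum (which is negative precisely when $\lambda<0$); for $\gamma=2$ it equals $(\xi^{2}+\lambda)/(2a^{2})$, monotone and of the sign of $\xi^{2}+\lambda$; for $\gamma>2$ with $\lambda\geq 0$ it decreases strictly from $+\infty$ to $0$; and for $\gamma>2$ with $\lambda<0$ it runs from $-\infty$ at $0^{+}$ to $0$ at $+\infty$ with a unique global maximum at $a_{Max}=(-\lambda/\xi^{2})^{1/(2\gamma-4)}$, where one computes $F_{pot}(a_{Max})=\frac{(\gamma-2)\xi^{2}}{2(\gamma-1)a_{Max}^{2}}>0$. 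These are exactly the facts already used in the proof of the preceding Lemma.

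The case analysis is then the classical phase-plane argument for second-order autonomous equations. An orbit trapped in a compact subinterval of $(0,\infty)$ away from its endpoints is periodic: this is $1<\gamma<2$ with $E(0)<0$, where $\{a:F_{pot}(a)\leq E(0)\}$ is a compact interval avoiding $0$ and $+\infty$, giving item (1). An orbit along which $a$ stays bounded below by a positive constant is global, since then $\dot{a}^{2}=2(E(0)-F_{pot}(a))$ forbids finite-time blow-up of $a$ ($F_{pot}\to 0$ at $+\infty$); this handles every branch where $F_{pot}\to+\infty$ at $0^{+}$ (namely $1<\gamma<2$ with $E(0)\geq 0$; $\gamma=2$ with $\xi^{2}>-\lambda$; $\gamma>2$ with $\lambda\geq 0$) and the branches that clear the barrier and run out to $+\infty$ in the cases $\gamma=2$, $\xi^{2}<-\lambda$ and $\gamma>2$, $\lambda<0$. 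The blow-up mechanism is the one computation needed: when the orbit heads toward $a=0$ with no turning point ahead of it, it reaches $0$ in finite time, because there $F_{pot}(a)\to-\infty$ like $a^{-2}$ ($\gamma=2$) or like $a^{-(2\gamma-2)}$ ($\gamma>2$), so $(E(0)-F_{pot}(a))^{-1/2}$ is $O(a)$ respectively $O(a^{\gamma-1})$ and $\int da/\sqrt{2(E(0)-F_{pot}(a))}$ converges at $0$; by the dictionary this is a finite-time blow-up of $(\rho,\vec{u})$. Feeding in the sign and monotonicity of $F_{pot}$, the position of $a_{0}$ relative to the critical point of $F_{pot}$, the position of $E(0)$ relative to $F_{pot}(a_{Max})$, and the sign of $a_{1}$ (which fixes the initial direction of travel on a monotone branch), one recovers items (2b), (3a), (3bI), (3bII), verbatim the conclusions of the preceding Lemma. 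The one point not contained in that Lemma is the finer split (2aI)--(2aII): when $\gamma=2$ and $\xi^{2}=-\lambda$, equation (\ref{ModifiedEmden}) degenerates to $\ddot{a}=0$, so $a(t)=a_{0}+a_{1}t$; for $a_{1}\geq 0$ this is positive on $[0,\infty)$ (global), while for $a_{1}<0$ it vanishes at a finite time (at $t=-a_{0}/a_{1}$), at which $\rho$ and $\vec{u}$ blow up.

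I expect the main obstacle to be organizational rather than conceptual. It lies in the $\gamma>2$, $\lambda<0$ regime, where $F_{pot}$ has a potential barrier at $a_{Max}$: one must pair three pieces of data --- whether $a_{0}$ lies left or right of $a_{Max}$, whether $E(0)$ lies below or above the barrier height $F_{pot}(a_{Max})$, and the sign of $a_{1}$ --- to decide whether the orbit clears the barrier and escapes to $+\infty$ (global) or is reflected by it and collapses into the origin (blow-up), which is exactly the dichotomy recorded in (3bI)--(3bII). The only genuinely analytic ingredient is the finite-time-arrival estimate at $a=0$; everything else is bookkeeping over the sign and monotonicity of $F_{pot}$ together with the fact that $F_{pot}\to 0$ at $+\infty$, and a direct appeal to the preceding Lemma.
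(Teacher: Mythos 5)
Your proposal is correct and follows essentially the same route as the paper: the paper offers no separate proof of the Corollary, deriving it directly from the preceding Lemma, whose proof is exactly your energy-conservation/phase-plane analysis of $F_{pot}$; your additions (the dictionary between the behaviour of $a(t)$ and that of $(\rho,\vec u)$, the convergence of $\int da/\sqrt{2(E(0)-F_{pot}(a))}$ at $a=0$, and the explicit treatment of the degenerate case $\gamma=2$, $\xi^{2}=-\lambda$ via $\ddot a=0$) merely fill in steps the paper leaves implicit. Note that your computation gives the blow-up time $T=-a_{0}/a_{1}$ in case (2aII), which is the correct value; the stated $T=-a_{1}/a_{0}$ in the Corollary is a typo.
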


\section{Conclusion and Discussion}

This paper provides a class of self-similar vortical flows for the 2D
compressible Euler equations. The result presented herein complements Makino's
solutions in radial symmetry without rotation. In addition, the corresponding
blowup or global existence conditions are classified by analyzing the new
Emden equation (\ref{2-Dg>1Rotation})$_{3}$. Solution (\ref{2-Dg>1Rotation})
is also the solution of the compressible Navier-Stokes equations in 2D:
\begin{equation}
\left\{
\begin{array}
[c]{rl}%
{\normalsize \rho}_{t}{\normalsize +\nabla\cdot(\rho\vec{u})} &
{\normalsize =}{\normalsize 0}\\[0.08in]%
\rho\lbrack\vec{u}_{t}+(\vec{u}\cdot\nabla)\vec{u}]+K\nabla\rho^{\gamma} &
=\mu\Delta\vec{u},
\end{array}
\right.
\end{equation}
with a positive constant $\mu$.

Based on the existence of the rotational and self-similar solutions
(\ref{2-Dg>1Rotation}) in 2D, it is natural to conjecture that there exists a
class of rotational solutions for the compressible Euler equations (\ref{e1})
in 3D, that complements those in radial symmetry,
\cite{Makino93exactsolutions}, \cite{YuenCNSNS2012}:\emph{
\begin{equation}
\left\{
\begin{array}
[c]{l}%
\rho=\frac{f(s)}{\underset{k=1}{\overset{3}{\Pi}}a_{k}}\\
u_{i}=\frac{\dot{a}_{i}}{a_{i}}\left(  x_{i}-d_{i}^{\ast}\right)  +\dot{d}%
_{i}^{\ast}\text{, \ \ \ \ \ for }i=1,2,3,
\end{array}
\right. \label{e10}%
\end{equation}
where
\begin{equation}
f(s)=\max\left(  \left(  -\frac{\xi(\gamma-1)}{2K\gamma}s+\alpha\right)
^{\frac{1}{\gamma-1}},\text{ }0\right)  ,\label{e11}%
\end{equation}
with
\begin{equation}
d_{i}^{\ast}=d_{i0}^{\ast}+td_{i1}^{\ast}\;,\quad\quad\quad s=\underset
{k=1}{\overset{3}{\sum}}(\frac{x_{k}-d_{k}^{\ast}}{a_{k}})^{2}.\label{e12}%
\end{equation}
Here, $\xi,d_{i0}^{\ast},d_{i1}^{\ast}$, $\alpha\geq0$ are constants, and
functions $a_{i}=a_{i}(t)$:
\begin{equation}
\left\{
\begin{array}
[c]{l}%
\ddot{a}_{i}=\frac{\xi}{a_{i}\left(  \underset{k=1}{\overset{3}{\Pi}}%
a_{k}\right)  ^{\gamma-1}}\text{ \ \ \ \ \ \ for }i=1,2,3\\[-0.02in]%
a_{i}(0)=a_{i0}>0,\text{ }\dot{a}_{i}(0)=a_{i1},
\end{array}
\right. \label{e13}%
\end{equation}
with arbitrary constants $a_{i0}$ and $a_{i1}$}.\newline Further research will
be carried out to investigate the possibility of these vortical solutions in 3D.

\end{document}